\documentclass[final,3p,times,numbers,sort&compress]{elsarticle}

\usepackage{amsmath}
\usepackage{amsfonts}
\usepackage{amsmath}
\usepackage{amssymb}
\usepackage{amsthm}
\usepackage[utf8]{inputenc}
\usepackage{graphicx}
\usepackage{bm}
\usepackage{bbm}

\journal{Physica A}

\usepackage[breaklinks=true,colorlinks=true,linkcolor=blue,urlcolor=blue,citecolor=blue]{hyperref}

\newcommand{\defeq}{\mathrel{\mathop:}=}
\renewcommand{\H}{\mathcal{H}}
\newcommand{\U}{\mathcal{U}}
\renewcommand{\S}{\mathcal{S}}
\newcommand{\params}{\boldsymbol{\lambda}}

\newtheorem*{theorem}{Theorem}

\begin{document}

\begin{frontmatter}

\title{Fundamental temperature in the superstatistical description\\ of non-equilibrium steady states}

\author[cchen,unab]{Sergio Davis\corref{cor1}}
\address[cchen]{Research Center on the Intersection of Plasma Physics, Matter and Complexity (P$^2$mc),\\ Comisión Chilena de Energía Nuclear, Casilla 188-D, Santiago, Chile}
\address[unab]{Departamento de Física y Astronomía, Facultad de Ciencias Exactas, Universidad Andres Bello,\\ Sazié 2212, piso 7, 8370136, Santiago, Chile.}
\ead{sergio.davis@cchen.cl}

\cortext[cor1]{Corresponding author}

\begin{abstract}
Among the statistical mechanical frameworks able to describe systems in non-equilibrium steady states such as collisionless plasmas, self-gravitating systems and other complex systems, superstatistics have gained 
recent attention. Superstatistics postulates a superposition of canonical systems with inverse temperatures $\beta$ described by a probability distribution depending on the external conditions. Unfortunately, the 
uncertainty about $\beta$ cannot be attributed to fluctuations of a phase space function, and this suggests that the distribution of $\beta$ is purely of statistical nature and must be inferred rather than measured. 
This lack of direct observability of the superstatistical temperature then becomes a conceptual issue in need of resolution. In this work we address this issue, showing that a mapping exists between functions of the 
superstatistical temperature and functions of the recently proposed fundamental temperature, a model-dependent function of the energy, in such a way that their expectation values coincide. We illustrate the use of 
this mapping by computing the conditional distribution of inverse temperature given energy for the $q$-canonical ensemble, as well as the full inverse temperature distribution, without the use of Laplace inversion.
\end{abstract}

\end{frontmatter}

\section{Introduction}

The appropriate extension of the concept of temperature to non-equilibrium states remains an open problem in statistical mechanics. One rather elegant path towards solving this problem involves the theory of 
superstatistics~\cite{Beck2003, Beck2004}, which has been recently successful in describing plasmas\cite{Ourabah2015, Ourabah2020b, Sanchez2021}, self-gravitating systems\cite{Ourabah2020c} and other complex systems out 
of  equilibrium\cite{Denys2016, Sanchez2025}. In superstatistics, systems in non-equilibrium steady states are described as superpositions of equilibrium states at different temperatures, thus temperature in these systems 
is treated as a random variable with its own probability density. As is well known, a system in equilibrium at temperature $T$ is described by the canonical ensemble,
\begin{equation}
\label{eq:canon}
P(\bm \Gamma|\beta) = \frac{\exp\big(-\beta\H(\bm \Gamma)\big)}{Z(\beta)},
\end{equation}
which is the probability (density) of observing a microstate $\bm \Gamma$ at inverse temperature $\beta \defeq 1/(k_B T)$. Here $\H(\bm \Gamma)$ is the Hamiltonian of the system and $Z(\beta)$ is the partition function.
In superstatistics, the canonical ensemble in \eqref{eq:canon} is replaced by the joint distribution of microstates $\bm \Gamma$ and inverse temperature $\beta$, which we will write as
\begin{equation}
\label{eq:super_joint}
P(\bm \Gamma, \beta|\params) = P(\bm \Gamma|\beta)P(\beta|\params) = \frac{\exp\big(-\beta \H(\bm \Gamma)\big)}{Z(\beta)}P(\beta|\params),
\end{equation}
under external conditions that will be represented by the set of parameters $\params$. This joint distribution is given simply by the product rule of probability theory~\cite{vonDerLinden2014}, where $P(\beta|\params)$ is 
the superstatistical distribution of inverse temperature $\beta$. By integrating out the variable $\beta$, the superstatistical microstate distribution is given by
\begin{equation}
\label{eq:super_marg}
P(\bm \Gamma|\params) = \int_0^\infty d\beta\,\frac{P(\beta|\params)}{Z(\beta)}\exp\big(-\beta\H(\bm \Gamma)\big).
\end{equation}

In this way, each choice of $P(\beta|\params)$ leads to a microstate distribution different from \eqref{eq:canon}. Superstatistical states then belong to the wider class of \emph{energy steady states} (ESS), that is, 
non-equilibrium steady states where the microstate probability (density) is completely determined by the microstate energy. In other words, an ESS has a microstate distribution of the form
\begin{equation}
\label{eq:rho}
P(\bm \Gamma|\params) = \rho\big(\H(\bm \Gamma); \params\big),
\end{equation}
where $\rho$ is called the \emph{ensemble function}. By defining the \emph{superstatistical weight function} $f(\beta; \params)$ as
\begin{equation}
\label{eq:super_weight}
f(\beta; \params) \defeq \frac{P(\beta|\params)}{Z(\beta)},
\end{equation}
we directly see from \eqref{eq:super_marg} that
\begin{equation}
\label{eq:super_Laplace}
\rho(E; \params) = \int_0^\infty d\beta\,f(\beta; \params)\exp(-\beta E),
\end{equation}
i.e. the superstatistical ensemble function $\rho$ is the Laplace transform of $f$. 

Despite the success of the theory of superstatistics, there are at least two difficulties in this approach. The first issue is that not every ESS can be described using superstatistics, that is, not every 
non-negative function $\rho(E; \params)$ can be expressed as the Laplace transform of a non-negative function $f(\beta; \params)$. Notable exceptions to superstatistics include the Gaussian ensemble~\cite{Challa1988, 
Challa1988a, Johal2003, Suzuki2022}, the $q$-canonical ensemble of Tsallis statistics~\cite{Tsallis1988, Tsallis2009c} with $q < 1$ and the microcanonical ensemble.

The second is perhaps a more profound issue. It has been shown~\cite{Davis2018} that in superstatistical states, the inverse temperature $\beta$ cannot be directly measured as the value of a phase-space observable 
$B(\bm \Gamma)$. This implies that there is no function $B(\bm \Gamma)$ such that
\begin{equation}
\big<G(\beta)\big>_{\params} = \big<G(B)\big>_{\params}
\end{equation}
for every function $G$. In particular, this implies that $P(\beta|\params)$ cannot be estimated by computing a histogram of a function $B$, and instead has to be inferred indirectly. In contrast, the energy $E$ can 
always be measured directly using the Hamiltonian function $\H(\bm \Gamma)$, so for any function $G(E)$ the equality
\begin{equation}
\big<G(E)\big>_{\params} = \big<G(\H)\big>_{\params}
\end{equation}
always holds. 

In this work we provide a powerful result that compensates the lack of direct observability of $\beta$. We prove here that, for every superstatistical ESS and function $G(\beta)$, there exists a function 
$G_{\params}(\beta_F)$ such that
\begin{equation}
\big<G(\beta)\big>_{\params} = \big<G_{\params}(\beta_F)\big>_{\params}
\end{equation}
holds, where $\beta_F$ is the \emph{fundamental inverse temperature function}~\cite{Davis2019, Davis2023b}, defined for any ESS by
\begin{equation}
\label{eq:betaF_def}
\beta_F(E; \params) \defeq -\frac{\partial}{\partial E}\ln \rho(E; \params),
\end{equation}
and $G_{\params}$ is a linearly-transformed version of $G$.

\section{Properties of the fundamental temperature in superstatistics}

The main connection point between the fundamental inverse temperature $\beta_F$ in \eqref{eq:betaF_def} and the inverse temperature $\beta$ of the superstatistical formalism is the identity
\begin{equation}
\label{eq:mean_beta_E}
\big<\beta\big>_{E, \params} = \int_0^\infty d\beta\,P(\beta|E, \params)\beta = \beta_F(E; \params),
\end{equation}
where the conditional distribution of $\beta$ given $E$ is
\begin{equation}
\label{eq:pbeta_E}
P(\beta|E, \params) = \frac{f(\beta; \params)\exp(-\beta E)}{\rho(E; \params)}.
\end{equation}

\noindent
The joint distribution of energy $E$ and inverse temperature $\beta$ is obtained from \eqref{eq:super_joint} as
\begin{equation}
P(E, \beta|\params) = f(\beta; \params)\exp(-\beta E)\Omega(E),
\end{equation}
and from it, the conditional distribution of $\beta$ given $E$ is
\begin{equation}
P(\beta|E, \params) = \frac{P(E, \beta|\params)}{\int_0^\infty d\beta\,P(E, \beta|\params)} = \frac{f(\beta; \params)\exp(-\beta E)}{\int_0^\infty d\beta\,f(\beta; \params)\exp(-\beta E)}
\end{equation}
which is \eqref{eq:pbeta_E}. Its moments are given by
\begin{equation}
\label{eq:betamom_E}
\big<\beta^n\big>_{E, \params} \defeq \int_0^\infty d\beta\,P(\beta|E, \params)\beta^n = \frac{(-1)^n}{\rho(E; \params)}\frac{\partial^n \rho(E; \params)}{\partial E^n}
\end{equation}
for all integers $n \geq 0$ and all allowed energies $E$, and, as was shown in Ref.~\cite{Davis2026b}.

\begin{figure}[b!]
\begin{center}
\includegraphics[width=0.33\textwidth]{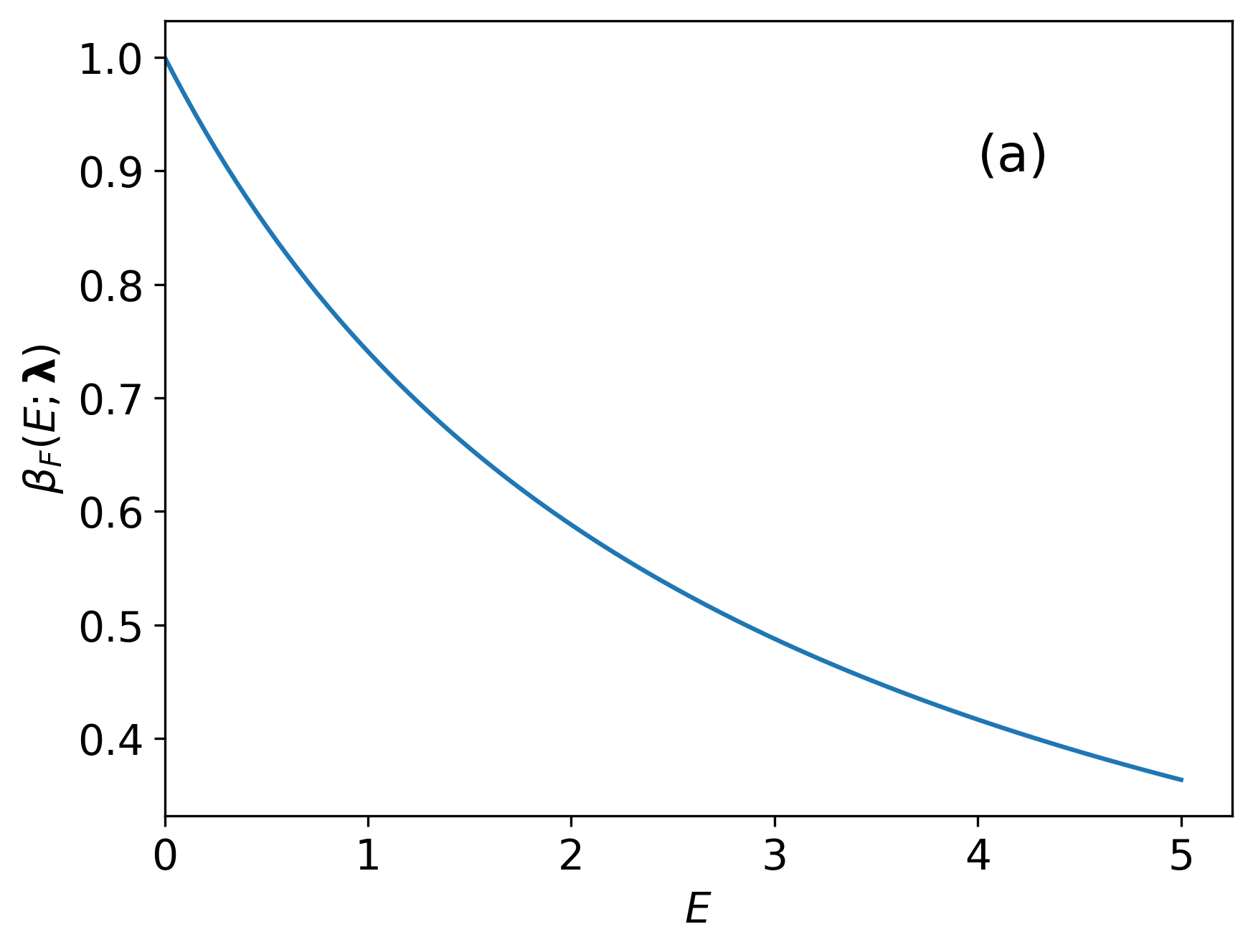}
\includegraphics[width=0.33\textwidth]{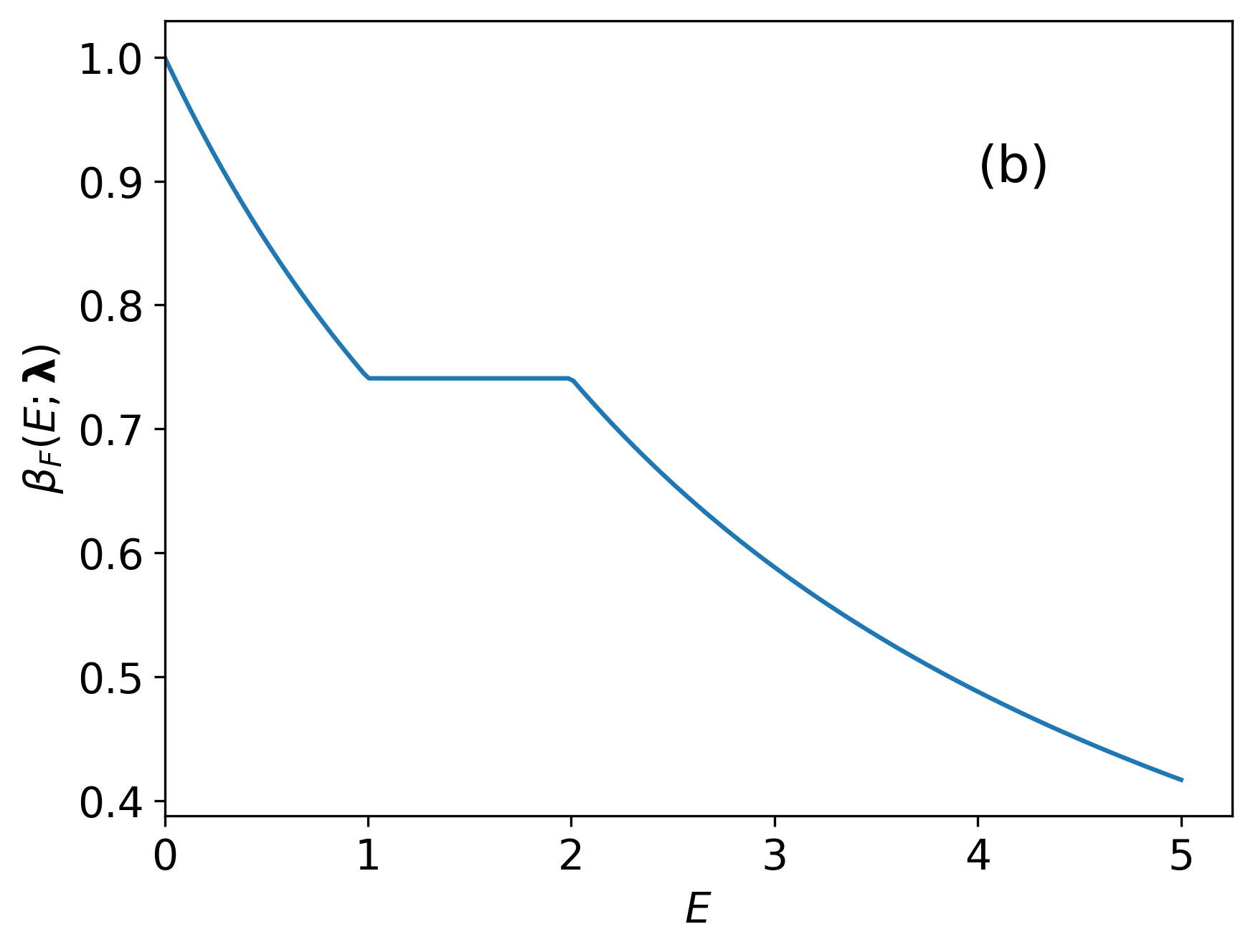}
\includegraphics[width=0.33\textwidth]{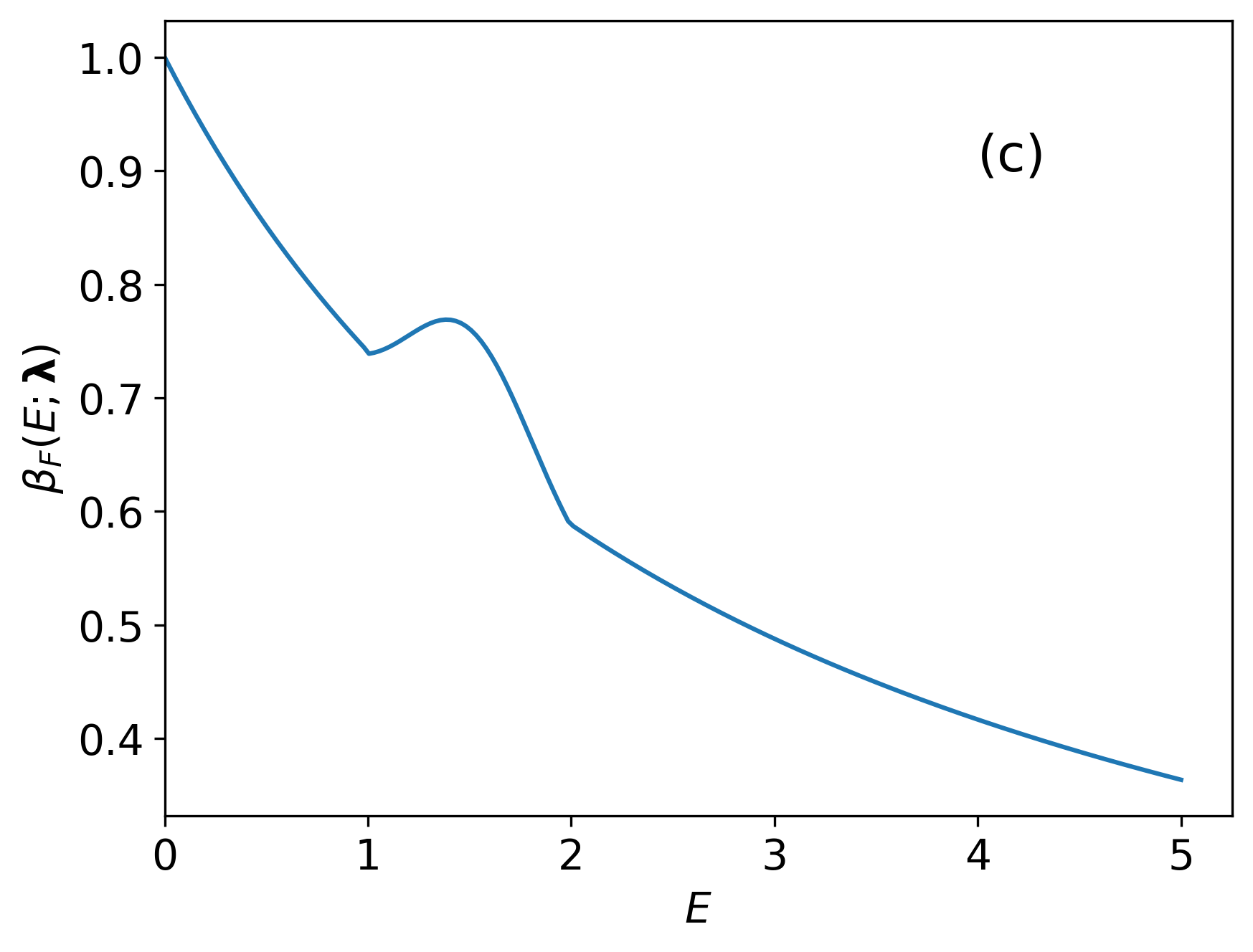}
\end{center}
\caption{Three examples of functions $\beta_F$. In the case (a), the function is strictly decreasing, therefore ${\beta_F}' < 0$ and, furthermore, $\beta_F$ is invertible. The case (b) has an interval where ${\beta_F}' = 0$, 
so $\beta_F$ is decreasing but not strictly, and therefore not invertible. In both cases (a) and (b), the value of $\beta_F$ determines the value of its derivative, in agreement with \eqref{eq:betaF_der}. Finally, in case (c) the 
derivative ${\beta_F}'$ does change sign, breaking the requirement in \eqref{eq:super_nec} and thus is not compatible with superstatistics.}
\label{fig:three_betaF}
\end{figure}

\vspace{5pt}
\noindent
From \eqref{eq:betamom_E} with $n = 2$ we obtain the second moment of $\beta$ given $E$ as
\begin{equation}
\label{eq:mean2_beta_E}
\big<\beta^2\big>_{E, \params} = \beta_F(E; \params)^2 - {\beta_F}'(E; \params),
\end{equation}
and combining \eqref{eq:mean_beta_E} and \eqref{eq:mean2_beta_E} we can write the variance of \eqref{eq:pbeta_E} as
\begin{equation}
\label{eq:var_beta_E}
\big<(\delta \beta)^2\big>_{E, \params} = -{\beta_F}'(E; \params).
\end{equation}

\noindent
Now, because the left-hand side of \eqref{eq:var_beta_E} is a variance, it is non-negative by definition, so we must have
\begin{equation}
\label{eq:super_nec}
{\beta_F}'(E; \params) \leq 0
\end{equation}
for all allowed energies $E$. This means that $\beta_F$ must be a decreasing function of $E$, and this condition implies that we can always construct a function $A(\beta_F)$ such that
\begin{equation}
\label{eq:betaF_der}
{\beta_F}' = A(\beta_F),
\end{equation}
that is, the value of $\beta_F$ by itself fixes the value of the derivative ${\beta_F}'$, without making reference to the energy $E$. Please note that the condition \eqref{eq:betaF_der} can be fulfilled even for functions 
$\beta_F$ that are not invertible, thus \eqref{eq:betaF_der} is true even if there are intervals of energy where $\beta_F$ is constant. These cases are depicted in Fig.~\ref{fig:three_betaF}.

The ordinary differential equation (ODE) in \eqref{eq:betaF_der} belongs to the class of \emph{autonomous} ODEs. Therefore we can understand superstatistical models as solutions of autonomous ODEs. From \eqref{eq:betaF_der} 
we can show by induction that the higher-order derivatives of $\beta_F$ also depend only on $\beta_F$, that is, there is a sequence of functions $A_1, A_2, A_3, \ldots$ such that
\begin{equation}
\label{eq:betaF_dern}
{\beta_F}^{(n)} = A_n(\beta_F)
\end{equation}
for all integer $n \geq 1$. Clearly \eqref{eq:betaF_dern} for $n = 1$ reduces to \eqref{eq:betaF_der} with $A_1 = A$, so it is true. Assuming \eqref{eq:betaF_dern} for $n$ we easily verify that \eqref{eq:betaF_dern} for 
$n+1$ follows, as
\begin{equation}
{\beta_F}^{(n+1)} = \frac{\partial {\beta_F}^{(n)}}{\partial E} = \frac{\partial A_n(\beta_F)}{\partial E} = A_n'(\beta_F){\beta_F}' = A_n'(\beta_F) A(\beta_F) = A_{n+1}(\beta_F),
\end{equation}
where we have identified
\begin{equation}
A_{n+1}(\beta_F) = A_n'(\beta_F) A(\beta_F).
\end{equation}

\section{A new perspective on the observability of the inverse temperature}

Now we can proceed with the central result of this work, namely, the proof that for any superstatistical model, the conditional distribution $P(\beta|E, \params)$ in \eqref{eq:pbeta_E} only depends on $E$ through the 
function $\beta_F(E; \params)$, even when this function is not in general invertible. In order to show why this is true, let us first consider the moment-generating function (MGF) of $P(\beta|E, \params)$, defined by
\begin{equation}
\label{eq:R_def}
R_{\params}(t; E) \defeq \big<\exp(-\beta t)\big>_{E, \params} = \sum_{n = 0}^\infty \frac{(-t)^n}{n!}\,\big<\beta^n\big>_{E, \params},
\end{equation}
which is such that
\begin{equation}
\label{eq:mom_from_R}
\big<\beta^n\big>_{E, \params} = (-1)^n\left[\frac{\partial^n R_{\params}(t; E)}{\partial t^n}\right]_{t = 0}.
\end{equation}

\noindent
From its definition, the function $R_{\params}(t; E)$ is also equal to the Laplace transform of $P(\beta|E, \params)$,
\begin{equation}
\label{eq:R_Laplace}
\int_0^\infty d\beta\,P(\beta|E, \params)\exp(-\beta t) = R_{\params}(t; E),
\end{equation}
and, replacing \eqref{eq:betamom_E} into the right-hand side of \eqref{eq:R_def}, it can be written as the \emph{ratio function}
\begin{equation}
\label{eq:R_ratio}
R_{\params}(t; E) = \frac{1}{\rho(E; \params)}\sum_{n = 0}^\infty \frac{t^n}{n!}\frac{\partial^n \rho(E; \params)}{\partial E^n} = \frac{\rho(E+t; \params)}{\rho(E; \params)}.
\end{equation}

\noindent
Taking logarithm of the rightmost-hand side of \eqref{eq:R_ratio} and expanding it in a Taylor series, we have
\begin{equation}
\ln \rho(E+t; \params)- \ln \rho(E; \params)
= \sum_{n=1}^\infty \frac{t^n}{n!}\frac{\partial^n}{\partial E^n}\ln \rho(E; \params)
= -\sum_{n=1}^\infty \frac{t^n}{n!}{\beta_F}^{(n-1)}(E; \params)
= -\sum_{n=1}^\infty \frac{t^n}{n!} A_{n-1}\big(\beta_F(E; \params)\big),
\end{equation}
which means
\begin{equation}
R_{\params}(t; E) = \exp\left(-\sum_{n=1}^\infty \frac{t^n}{n!} A_{n-1}\big(\beta_F(E; \params)\big)\right),
\end{equation}
thus $R_{\params}(t; E)$ depends on $E$ only through the function $\beta_F(E; \params)$. From here on, for simplicity we will let $\beta'$ denote one of the allowed values of $\beta_F$, and using this notation
we can replace $R_{\params}(t; E)$ by a new function $\mathcal{R}_{\params}(t, \beta')$ given by
\begin{equation}
\mathcal{R}_{\params}(t; \beta') = \exp\left(-\sum_{n=1}^\infty \frac{t^n}{n!} A_{n-1}(\beta')\right),
\end{equation}
such that
\begin{equation}
\label{eq:newR}
R_{\params}(t; E) = \mathcal{R}_{\params}\big(t; \beta_F(E; \params)\big).
\end{equation}

\noindent
Replacing \eqref{eq:newR} into \eqref{eq:R_Laplace}, we obtain
\begin{equation}
\label{eq:Rp_Laplace}
\int_0^\infty d\beta\,P(\beta|E, \params)\exp(-\beta t) = \mathcal{R}_{\params}\big(t; \beta_F(E; \params)\big)
\end{equation}
and this in turn implies that $P(\beta|E, \params)$ can only depend on $E$ through $\beta_F(E; \params)$. Therefore we can write
\begin{equation}
P(\beta|E, \params) = \mathcal{P}\big(\beta; \beta_F(E; \params), \params\big)
\end{equation}
with $\mathcal{P}$ a function whose role we will clarify shortly. In fact, the joint distribution of $\beta$ and $\beta'$ can be computed as
\begin{equation}
\begin{split}
P(\beta, \beta'|\params) & = \int_{-\infty}^{\infty} dE\,P(\beta, E|\params)\delta\big(\beta_F(E; \params)-\beta'\big) \\
& = \int_{-\infty}^{\infty} dE\,P(E|\params)P(\beta|E, \params)\delta\big(\beta_F(E; \params)-\beta'\big) \\
& = \mathcal{P}(\beta; \beta', \params)\int_{-\infty}^{\infty} dE\,P(E|\params)\delta\big(\beta_F(E; \params)-\beta'\big) \\
& = \mathcal{P}(\beta; \beta', \params)P(\beta'|\params)
\end{split}
\end{equation}
and it follows that the conditional distribution of $\beta$ given $\beta'$ is
\begin{equation}
P(\beta|\beta', \params) = \frac{P(\beta, \beta'|\params)}{P(\beta'|\params)} = \mathcal{P}(\beta; \beta', \params).
\end{equation}

Therefore, without ambiguity, we can encode the fact that $P(\beta|E, \params)$ depends on $E$ only through $\beta_F(E; \params)$ into the shortcut notation
\begin{equation}
\label{eq:master}
P(\beta|E, \params) = P(\beta|\beta', \params)
\end{equation}
provided we interpret $\beta'$ as the value of $\beta_F$ corresponding to the energy $E$. This is an important result for superstatistics, as it implies that the only information about the energy that is relevant to 
$\beta$ is the value of $\beta_F$. From \eqref{eq:master} we can derive our main result. 

\begin{theorem}
Let $G$ be any function of the superstatistical inverse temperature $\beta$ such that its expectation value is well defined. Then it is possible to define a transformed function $G_{\params}$ of the fundamental inverse 
temperature $\beta_F$ such that
\begin{equation}
\label{eq:G_equality}
\big<G(\beta)\big>_{\params} = \big<G_{\params}(\beta_F)\big>_{\params}
\end{equation}
holds. Furthermore, the transformed function $G_{\params}$ is defined by
\begin{equation}
\label{eq:Gp_def}
G_{\params}(\beta') \defeq \big<G(\beta)\big>_{\beta', \params} = \int_0^\infty d\beta\,G(\beta)\,P(\beta|\beta', \params),
\end{equation}
i.e., it is the conditional expectation of $G(\beta)$ given a fixed value of $\beta_F$.

\end{theorem}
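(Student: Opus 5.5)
The plan is to prove \eqref{eq:G_equality} by the law of total expectation, conditioning first on the energy and then using \eqref{eq:master} to replace the conditioning on $E$ by conditioning on $\beta_F$. Define $G_{\params}$ by \eqref{eq:Gp_def}. Well-definedness of $\big<G(\beta)\big>_{\params}$ (absolute integrability of $G$ against $P(\beta|\params)$) will be assumed, so that Fubini--Tonelli justifies every exchange of integration order below.

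First, write the marginal of $\beta$ as a mixture over energies:
\begin{equation*}
P(\beta|\params) = \int dE\,P(E|\params)\,P(\beta|E,\params),
\end{equation*}
where $P(E|\params)=\rho(E;\params)\Omega(E)$ and $P(\beta|E,\params)$ is given by \eqref{eq:pbeta_E}. Inserting this into $\big<G(\beta)\big>_{\params}=\int_0^\infty d\beta\,G(\beta)P(\beta|\params)$ and exchanging integrals gives
\begin{equation*}
\big<G(\beta)\big>_{\params}=\int dE\,P(E|\params)\big<G(\beta)\big>_{E,\params}.
\end{equation*}

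Second, invoke \eqref{eq:master}. Since $P(\beta|E,\params)=\mathcal{P}\big(\beta;\beta_F(E;\params),\params\big)=P(\beta|\beta',\params)$ evaluated at $\beta'=\beta_F(E;\params)$, the inner conditional expectation equals $G_{\params}\big(\beta_F(E;\params)\big)$ by \eqref{eq:Gp_def}. Hence
\begin{equation*}
\big<G(\beta)\big>_{\params}=\int dE\,P(E|\params)\,G_{\params}\big(\beta_F(E;\params)\big).
\end{equation*}
The right-hand side is exactly the expectation of the phase-space observable $G_{\params}\big(\beta_F(\H(\bm\Gamma);\params)\big)$, that is, $\big<G_{\params}(\beta_F)\big>_{\params}$. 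This proves \eqref{eq:G_equality}.

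Alternatively, one can pass through the induced distribution $P(\beta'|\params)=\int dE\,P(E|\params)\delta\big(\beta_F(E;\params)-\beta'\big)$, using the factorization $P(\beta,\beta'|\params)=P(\beta|\beta',\params)P(\beta'|\params)$ derived above. Integrating $G(\beta)$ against this joint density yields $\int d\beta'\,P(\beta'|\params)G_{\params}(\beta')$, which by the delta-function representation again equals $\big<G_{\params}(\beta_F)\big>_{\params}$.

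The main obstacle is making rigorous the step that identifies $P(\beta|E,\params)$ with a function of $\beta_F(E;\params)$ alone, and the meaning of conditioning on $\beta'$ when $\beta_F$ is not invertible. This is delicate when $\beta_F$ is constant on energy intervals, where $P(\beta'|\params)$ acquires atoms. The cleanest route avoids densities in $\beta'$ altogether. I would take $G_{\params}$ to be defined pointwise through $\mathcal{P}(\beta;\beta',\params)$, which is well defined because \eqref{eq:Rp_Laplace} together with uniqueness of the Laplace transform shows that $P(\beta|E,\params)$ coincides for all $E$ sharing the same $\beta_F$ value. With that definition, only the first route above is needed, and that route uses nothing beyond Fubini.
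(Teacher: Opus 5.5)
Your proof is correct and follows essentially the same route as the paper: average $\big<G(\beta)\big>_{E,\params}$ over $P(E|\params)$, and use \eqref{eq:master} to identify that inner conditional expectation with $G_{\params}\big(\beta_F(E;\params)\big)$. Your alternative ending through $P(\beta'|\params)$ and the inserted delta function is exactly how the paper concludes; your primary ending, which reads $\int dE\,P(E|\params)\,G_{\params}(\beta_F(E;\params))$ directly as the expectation of a phase-space observable, is a slightly cleaner shortcut because it never needs a density for $\beta'$.
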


\begin{proof}
We compute the expectation of $G$ given $E$ and $\params$ as
\begin{equation}
\big<G(\beta)\big>_{E, \params} = \int_0^\infty d\beta\,G(\beta)P(\beta|E, \params)
= \int_0^\infty d\beta\,G(\beta)\Big[P(\beta|\beta', \params)\Big]_{\beta' = \beta_F(E; \params)}
= \left[\int_0^\infty d\beta\,G(\beta)\,P(\beta|\beta', \params)\right]_{\beta' = \beta_F(E; \params)}
\end{equation}
where we have used \eqref{eq:master} to replace $P(\beta|E, \params)$ by $P(\beta|\beta', \params)$. Using the definition of $G_{\params}$ in \eqref{eq:Gp_def}, we have
\begin{equation}
\label{eq:G_equality_E}
\big<G(\beta)\big>_{E, \params} = G_{\params}\big(\beta_F(E; \params)\big),
\end{equation}
and by taking expectation of $\big<G(\beta)\big>_{E, \params}$ given $\params$ and using \eqref{eq:G_equality_E}, we have
\begin{equation}
\label{eq:G_equality_pre}
\big<G(\beta)\big>_{\params} = \int_{-\infty}^{\infty}\hspace{-7pt}dE\,P(E|\params)\big<G(\beta)\big>_{E, \params} = \int_{-\infty}^{\infty}\hspace{-7pt}dE\,P(E|\params)G_{\params}\big(\beta_F(E; \params)\big).
\end{equation}

\noindent 
Finally, introducing a factor of 1 as
\begin{equation}
1 = \int_0^\infty d\beta'\,\delta\big(\beta_F(E; \params)-\beta'\big)
\end{equation}
in the rightmost-hand side of \eqref{eq:G_equality_pre} and changing the order of integration, we obtain
\begin{equation}
\begin{split}
\big<G(\beta)\big>_{\params} & = \int_{-\infty}^{\infty}\hspace{-7pt}dE\,P(E|\params)\left[\int_0^\infty d\beta'\,\delta\big(\beta_F(E; \params)-\beta'\big)\right]G_{\params}\big(\beta_F(E; \params)\big) \\
& = \int_0^\infty\hspace{-7pt}d\beta'\,G_{\params}(\beta')\int_{-\infty}^{\infty}\hspace{-7pt}dE\,P(E|\params)\delta\big(\beta_F(E; \params)-\beta'\big) \\
& = \int_0^\infty d\beta'\,G_{\params}(\beta')P(\beta'|\params) = \big<G_{\params}(\beta_F)\big>_{\params},
\end{split}
\end{equation}
which is \eqref{eq:G_equality}.
\end{proof}

\section{Conditional distribution of inverse temperature in the $q$-canonical ensemble}

The $q$-canonical ensemble is a generalization of \eqref{eq:canon} commonly used in the context of Tsallis' non-extensive statistical mechanics~\cite{Tsallis2009c}, where the parameter $q$ is known as the 
entropic index. Within the superstatistical framework, $q \geq 1$ is required. The ensemble function for the $q$-canonical ensemble in that case is 
\begin{equation}
\label{eq:qcanon}
\rho(E; q, \beta_0) = \frac{1}{Z_q(\beta_0)}\Big[1 + (q-1)\beta_0 E\Big]^{\frac{1}{1-q}}
\end{equation}
with $\beta_0 > 0$. The corresponding fundamental inverse temperature function is
\begin{equation}
\label{eq:qcanon_betaF}
\beta_F(E; q, \beta_0) = \frac{\beta_0}{1 + (q-1)\beta_0 E},
\end{equation}
so that $0 \leq \beta_F \leq \beta_0$, therefore the parameter $\beta_0$ can be understood as the upper bound for the fundamental inverse temperature. The derivative of $\beta_F$ is given by
\begin{equation}
{\beta_F}'(E; q, \beta_0) = (1-q)\beta_F(E; q, \beta_0)^2,
\end{equation}
and we see that the function $A(\beta)$ in \eqref{eq:betaF_der} must be
\begin{equation}
\label{eq:qcanon_A}
A(\beta) = (1-q)\beta^2.
\end{equation}

As an application of the formalism developed in the previous section, we will compute the conditional distribution $P(\beta|E, \params)$ for the $q$-canonical ensemble without using the Laplace inversion of $\rho(E; \params)$ 
to obtain $f(\beta; \params)$. In order to do this, we will first use \eqref{eq:mom_from_R} to compute the conditional moments of $\beta$ given $\beta'$, as
\begin{equation}
\label{eq:betan_Rq}
\big<\beta^n\big>_{\beta', q, \beta_0} = (-1)^n\left[\frac{\partial^n \mathcal{R}_q(t; \beta')}{\partial t^n}\right]_{t = 0}.
\end{equation}

\noindent
Using \eqref{eq:qcanon} we can compute the ratio function $R_{q, \beta_0}(t; E)$ in \eqref{eq:R_ratio} as
\begin{equation}
R_{q, \beta_0}(t; E) = \left[\frac{1 + (q-1)\beta_0\big(E+t\big)}{1 + (q-1)\beta_0 E}\right]^{\frac{1}{1-q}}
= \left[1 + \frac{1 + (q-1)\beta_0t}{1 + (q-1)\beta_0 E}\right]^{\frac{1}{1-q}},
\end{equation}
which can be written as a function of $\beta_F$ as
\begin{equation}
\label{eq:qcanon_Rp}
\mathcal{R}_q(t; \beta') = \Big[1 + \beta'(q-1)t\Big]^{\frac{1}{1-q}} = \exp_q\big(-\beta'\,t\big)
\end{equation}
where $\exp_q$ is the $q$-exponential function~\cite{Naudts2011},
\begin{equation}
\exp_q(x) \defeq \big[1 + (1-q)x\big]_+^{\frac{1}{1-q}}.
\end{equation}

Here we immediately note that, in $\mathcal{R}_q$, the explicit dependence on $\beta_0$ disappears. Interestingly, by taking expectation of \eqref{eq:Rp_Laplace} given $(q, \beta_0)$ and replacing \eqref{eq:qcanon_Rp} 
it follows that
\begin{equation}
\Big<\exp(-\beta t)\Big>_{q, \beta_0} = \Big<\exp_q\big(-\beta_F t\big)\Big>_{q, \beta_0}.
\end{equation}

\noindent
By replacing \eqref{eq:qcanon_Rp} into \eqref{eq:betan_Rq} and using the $n$-th derivative of the $q$-exponential function
\begin{equation}
\frac{\partial^n}{\partial x^n}\exp_q(x) = \exp_q(x)^{1 - n(1-q)}\,(q-1)^n\,\frac{\Gamma\left(n + \frac{1}{q-1}\right)}{\Gamma\left(\frac{1}{q-1}\right)},
\end{equation}
we obtain
\begin{equation}
\label{eq:qcanon_mom}
\big<\beta^n\big>_{\beta', q} = (q-1)^n \frac{\Gamma\left(n + \frac{1}{q-1}\right)}{\Gamma\left(\frac{1}{q-1}\right)}\,(\beta')^n
\end{equation}

\noindent
Here we note that the $q$-canonical ensemble is the only superstatistical model where
\begin{equation}
\label{eq:qcanon_cond}
\big<\beta^n\big>_{E, \params} = f_n(\params)\,(\beta')^n
\end{equation}
for all integers $n \geq 1$. In fact, From \eqref{eq:qcanon_cond} with $n = 2$ we have that
\begin{equation}
\big<\beta^2\big>_{E, \params} = f_2(\params)(\beta')^2,
\end{equation}
and, replacing it into \eqref{eq:mean2_beta_E},
\begin{equation}
{\beta_F}' = \big[1 - f_2(\params)\big](\beta')^2,
\end{equation}
which is \eqref{eq:qcanon_A}, whose unique solution is \eqref{eq:qcanon_betaF}, i.e. the $q$-canonical ensemble.

Now we will use the conditional moments in \eqref{eq:qcanon_mom} to recover $P(\beta|E, \params)$ without using the inversion of a Laplace transform. For this we consider a function $G(\beta)$ with power series
\begin{equation}
G(\beta) = \sum_{n = 0}^\infty C_n \beta^n,
\end{equation}
whose transformed function $G_q$ is given by
\begin{equation}
\label{eq:Gq_series}
G_q(\beta') = \sum_{n=0}^\infty C_n(q-1)^n \frac{\Gamma\left(n + \frac{1}{q-1}\right)}{\Gamma\left(\frac{1}{q-1}\right)}\,(\beta')^n.
\end{equation}

\noindent
We can then rewrite \eqref{eq:Gq_series} in terms of the function $G$ by using the integral form of the gamma function, as
\begin{equation}
\label{eq:Gq_integral}
\begin{split}
G_q(\beta') & = \frac{1}{\Gamma\left(\frac{1}{q-1}\right)} = \sum_{n=0}^\infty C_n\,\Gamma\Big(n + \frac{1}{q-1}\Big) \big[\beta'(q-1)\big]^n 
= \frac{1}{\Gamma\left(\frac{1}{q-1}\right)}\sum_{n = 0}^\infty C_n\left[\int_0^\infty ds\,s^{n + \frac{1}{q-1}-1}\exp(-s)\right] \big[\beta'(q-1)\big]^n \\
& = \int_0^\infty ds\,\frac{s^{\frac{1}{q-1}-1}\exp(-s)}{\Gamma\left(\frac{1}{q-1}\right)}\sum_{n = 0}^\infty C_n \big[\beta'(q-1)s\big]^n 
= \int_0^\infty ds\,\frac{s^{\frac{1}{q-1}-1}\exp(-s)}{\Gamma\left(\frac{1}{q-1}\right)} G\big(\beta'(q-1)s\big),
\end{split}
\end{equation}
and performing the change of variables from $s$ to $\beta \defeq \beta'(q-1)s$ we can write \eqref{eq:Gq_integral} as 
\begin{equation}
G_q(\beta') = \int_0^\infty\hspace{-7pt}d\beta\left[\frac{1}{\beta'(q-1)\,\Gamma\left(\frac{1}{q-1}\right)}\left(\frac{\beta}{\beta'(q-1)}\right)^{\frac{1}{q-1}-1}\hspace{-12pt}
\exp\left(-\frac{\beta}{\beta'(q-1)}\right)\right]G(\beta).
\end{equation}
 
It follows by comparison with \eqref{eq:Gp_def} that the function in brackets must be the conditional distribution $P(\beta|\beta', q, \beta_0)$. That is, we have
\begin{equation}
\label{eq:qcanon_betadist}
P(\beta|\beta', q) = \frac{1}{\beta'(q-1)\,\Gamma\left(\frac{1}{q-1}\right)}\exp\left(-\frac{\beta}{\beta'(q-1)}\right)\left(\frac{\beta}{\beta'(q-1)}\right)^{\frac{1}{q-1}-1},
\end{equation}
result which is verified in \ref{app:laplace} using the Laplace inversion of \eqref{eq:super_Laplace}. This is a gamma distribution with mean and variance
\begin{subequations}
\begin{align}
\big<\beta\big>_{\beta', q, \beta_0} & = \beta', \\
\big<(\delta \beta)^2\big>_{\beta', q, \beta_0} & = (q-1)(\beta')^2,
\end{align}
\end{subequations}
in agreement with \eqref{eq:mean_beta_E} and \eqref{eq:var_beta_E}, respectively.

\section{Inverse temperature distribution for the $q$-canonical ensemble}

In this section we will use the distribution of $\beta$ given $\beta'$ obtained in \eqref{eq:qcanon_betadist} to compute $P(\beta|q, \beta_0)$ in the case of a system with constant heat capacity $C_E = \alpha k_B$. 
The density of states for this kind of system is given by
\begin{equation}
\label{eq:DOS_alpha}
\Omega(E) = \omega_0\,E^\alpha
\end{equation}
and the energy distribution $P(E|q, \beta_0)$ can be computed by replacing \eqref{eq:qcanon} and \eqref{eq:DOS_alpha} into 
\begin{equation}
P(E|\params) = \rho(E; \params)\Omega(E).
\end{equation}

\noindent
The result can be written as
\begin{equation}
\label{eq:qcanon_Edist}
P(\varepsilon|q, \beta_0) = \frac{\big(1 + \varepsilon\big)^{\frac{1}{1-q}}\,\varepsilon^\alpha}{B\left(\alpha+1, \frac{1}{q-1}-\alpha-1\right)},
\end{equation}
which is an inverted beta distribution for the reduced variable $\varepsilon \defeq \beta_0(q-1)E$. Now invoking the marginalization rule over the variable $\beta'$ we have
\begin{equation}
\label{eq:pbeta_marg}
P(\beta|q, \beta_0) = \int_0^\infty d\beta'\,P(\beta, \beta'|q, \beta_0) = \int_0^\infty d\beta'\,P(\beta|\beta', q)P(\beta'|q, \beta_0),
\end{equation}
and we can obtain the distribution $P(\beta'|q, \beta_0)$ from $P(\varepsilon|q, \beta_0)$ in \eqref{eq:qcanon_Edist} as
\begin{equation}
\label{eq:pbetap_pre}
P(\beta'|q, \beta_0) = \left<\delta\left(\frac{\beta_0}{1 + \varepsilon} - \beta'\right)\right>_{q, \beta_0}
= \frac{1}{\beta_0}\Big<(1+\varepsilon)^2\,\delta\big(\varepsilon-\varepsilon_F(\beta')\big)\Big>_{q, \beta_0}
= \frac{\beta_0}{(\beta')^2} P\big(\varepsilon = \varepsilon_F(\beta')\big|q, \beta_0\big)
\end{equation}
where $\varepsilon_F$ is given by
\begin{equation}
\label{eq:EF}
\varepsilon_F(\beta') = \frac{\beta_0}{\beta'} - 1.
\end{equation}

\noindent
Replacing \eqref{eq:qcanon_Edist} and \eqref{eq:EF} into \eqref{eq:pbetap_pre} then yields
\begin{equation}
P(\beta'|q, \beta_0) = \frac{1}{\beta_0 B\left(\alpha+1, \frac{1}{q-1}-\alpha-1\right)}\,\left(\frac{\beta_0}{\beta'}\right)^{\frac{1}{1-q}+2}\left(\frac{\beta_0}{\beta'}-1\right)^{\alpha}, \qquad 0 \leq \beta' \leq \beta_0,
\end{equation}
which means, using \eqref{eq:pbeta_marg}, that
\begin{equation}
P(\beta|q, \beta_0) = \frac{\big[\beta_0(q-1)\big]^{1 - \frac{1}{q-1}}\,\beta^{\frac{1}{q-1}-1}}{\Gamma(\alpha+1)\Gamma\left(\frac{1}{q-1}-\alpha-1\right)}
\int_0^\infty\hspace{-10pt} \frac{d\beta'}{(q-1)(\beta')^2}\exp\left(-\frac{\beta}{\beta'(q-1)}\right)\left(\frac{\beta_0}{\beta'}-1\right)^{\alpha},
\end{equation}
then we finally obtain
\begin{equation}
P(\beta|q, \beta_0) = \frac{1}{\beta_0(q-1)\,\Gamma\left(\frac{1}{q-1}-\alpha-1\right)}\exp\left(-\frac{\beta}{\beta_0(q-1)}\right)\left(\frac{\beta}{\beta_0(q-1)}\right)^{\frac{1}{q-1}-\alpha-2}.
\end{equation}

\noindent
This is a gamma distribution with mean and variance given by
\begin{subequations}
\begin{align}
\beta_S = \big<\beta\big>_{q, \beta_0} & = \beta_0\Big(1+(1-q)(\alpha+1)\Big), \\
\U = \big<(\delta \beta)^2\big>_{q, \beta_0} & = (\beta_0)^2 (q-1)\Big(1 + (1-q)(\alpha+1)\Big),
\end{align}
\end{subequations}
in agreement with Eqs. (83) and (85) of Ref.~\cite{Davis2022b}. The reduced inverse temperature covariance, defined by
\begin{equation}
u \defeq \frac{\U}{(\beta_S)^2},
\end{equation}
depends only on $q$, and is given by
\begin{equation}
u = \frac{q-1}{1 + (1-q)(\alpha+1)}.
\end{equation}

By replacing $\beta_0$ and $q$ in terms of $u$ and $\beta_S$ we can simplify the distribution of inverse temperature, which now reads
\begin{equation}
P(\beta|u, \beta_S) = \frac{1}{u\beta_S\,\Gamma\left(\frac{1}{u}\right)}\exp\left(-\frac{\beta}{u\beta_S}\right)\left(\frac{\beta}{u\beta_S}\right)^{\frac{1}{u}-1}.
\end{equation}

We emphasize here that this distribution leads to the $q$-canonical ensemble only in systems where the partition function $Z(\beta)$ is such that the weight function $f(\beta; q, \beta_0)$ in \eqref{eq:super_weight} 
is proportional to a gamma distribution.

\section{A generalization of the entropic index $q$ for superstatistics}

\begin{figure}[b!]
\begin{center}
\includegraphics[width=0.7\textwidth]{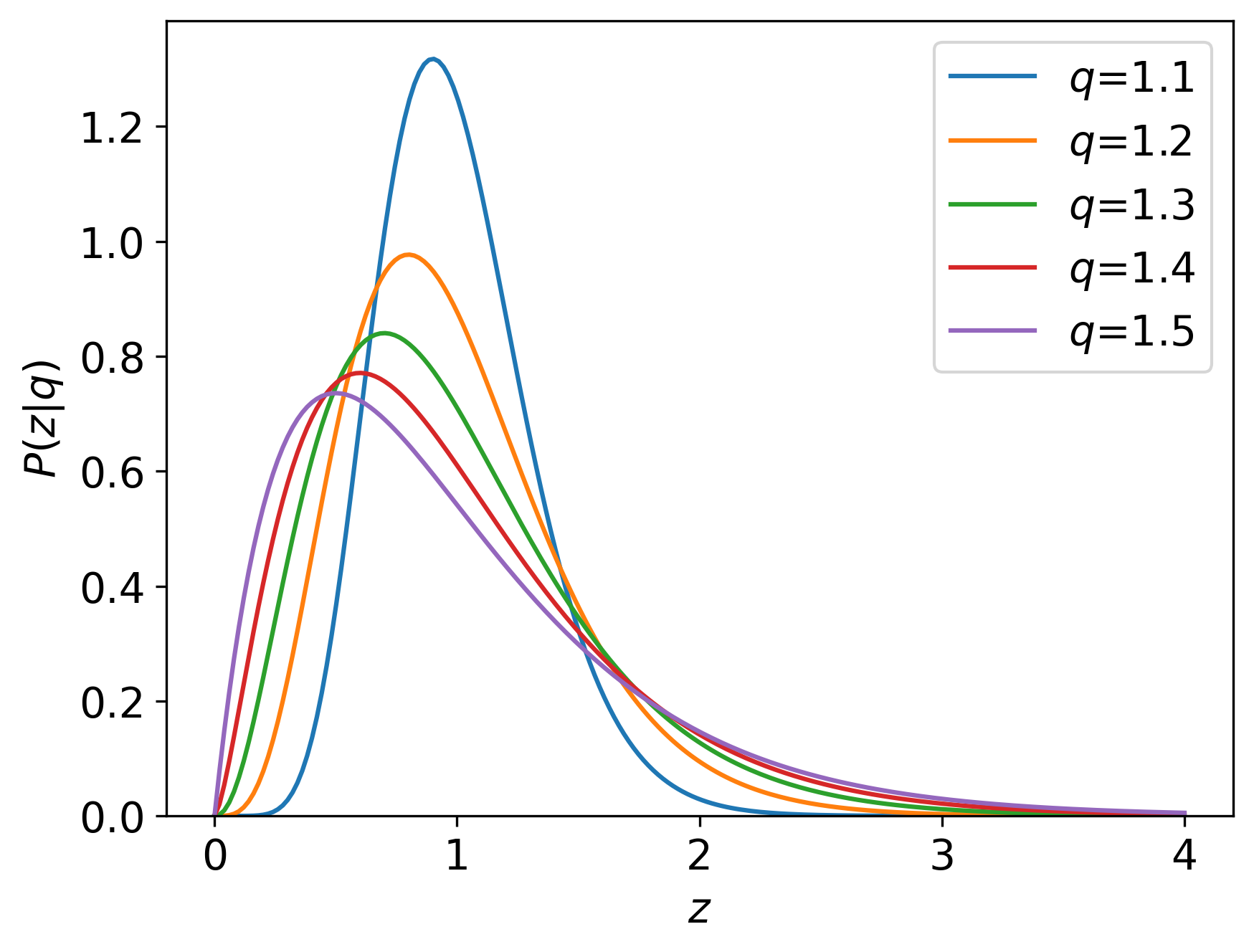}
\end{center}
\caption{Distribution of $z \defeq \beta/\beta_F$ for the $q$-canonical ensemble, for different values of $q$.}
\label{fig:probz}
\end{figure}

\noindent
In the $q$-canonical ensemble, as seen from \eqref{eq:qcanon_betadist}, the reduced inverse temperature
\begin{equation}
z \defeq \frac{\beta}{\beta'}
\end{equation}
has a universal statistical distribution which depends solely on $q$. In fact, we can explicitly compute the distribution of $z$ given $\beta'$ and $q$ as
\begin{equation}
P(z|\beta', q) = \int_0^\infty d\beta\,P(\beta|\beta', q)\delta\left(z-\frac{\beta}{\beta'}\right)
= \frac{1}{(q-1)\Gamma\left(\frac{1}{q-1}\right)}\exp\left(-\frac{z}{q-1}\right)\left(\frac{z}{q-1}\right)^{\frac{1}{q-1}-1},
\end{equation}
which does not actually depend on $\beta'$. Therefore we also have that
\begin{equation}
\label{eq:probz}
P(z|q) = \frac{1}{(q-1)\Gamma\left(\frac{1}{q-1}\right)}\exp\left(-\frac{z}{q-1}\right)\left(\frac{z}{q-1}\right)^{\frac{1}{q-1}-1},
\end{equation}
which is a gamma distribution, as shown in Fig.~\ref{fig:probz}. The mean and variance of $z$ for a given $q$ are
\begin{subequations}
\begin{align}
\big<z\big>_q & = 1, \\
\big<(\delta z)^2\big>_q & = q-1,
\end{align}
\end{subequations}
respectively, and from them we obtain
\begin{equation}
\label{eq:z2_q}
\big<z^2\big>_q = q.
\end{equation}

\noindent
Here \eqref{eq:z2_q} suggests a generalization of Tsallis' entropic index $q$, namely
\begin{equation}
Q(\params) \defeq \left<\left(\frac{\beta}{\beta_F}\right)^2\right>_{\params}
\end{equation}
for any superstatistical model, such that $Q(q, \beta_0) = q$ for the $q$-canonical ensemble. Moreover, from \eqref{eq:mean2_beta_E} it follows that
\begin{equation}
\left<\left(\frac{\beta}{\beta_F}\right)^2\right>_{E, \params} =1 - \frac{{\beta_F}'(E; \params)}{\beta_F(E; \params)^2}
\end{equation}
and by taking expectation given $\params$,
\begin{equation}
Q(\params) = \left<\left(\frac{\beta}{\beta_F}\right)^2\right>_{\params} = 1 - \left<\frac{{\beta_F}'}{(\beta_F)^2}\right>_{\params}.
\end{equation}

\noindent
In this way, from the inequality \eqref{eq:super_nec} it follows that
\begin{equation}
Q(\params) \geq 1
\end{equation}
for any superstatistical model. Because the distribution $P(z|q)$ is universal for $q$-canonical systems, we can use its entropy
\begin{equation}
\S_z(q) \defeq -\int_0^\infty dz\,P(z|q)\ln P(z|q)
\end{equation}
as the basis for an entropic prior~\cite{Caticha2004, Neumann2007}, of the form
\begin{equation}
\label{eq:entropic_prior}
P(q|\varnothing) = \frac{1}{\zeta}\exp\big(\S_z(q)\big).
\end{equation}

\noindent
Replacing \eqref{eq:probz} we have
\begin{equation}
\begin{split}
\S_z(q) & = \ln\,(q-1) + \ln \Gamma\left(\frac{1}{q-1}\right) + \int_0^\infty ds\,\frac{\exp(-s)s^{\frac{1}{q-1}-1}}{\Gamma\left(\frac{1}{q-1}\right)}\left[s - \Big(\frac{1}{q-1}-1\Big)\ln s\right] \\
& = \ln\,(q-1) + \ln \Gamma\left(\frac{1}{q-1}\right) + \frac{1}{q-1} - \Big(\frac{1}{q-1}-1\Big)\,\psi\left(\frac{1}{q-1}\right),
\end{split}
\end{equation}
where $\psi$ is the digamma function, and replacing $\S_z(q)$ into \eqref{eq:entropic_prior} we see that the resulting entropic prior
\begin{equation}
\label{eq:prior_q}
P(q|\varnothing) = \frac{(q-1)}{\zeta}\Gamma\left(\frac{1}{q-1}\right)\exp\left(\frac{1}{q-1}\left[1 + (q-2)\,\psi\Big(\frac{1}{q-1}\Big)\right]\right), \qquad q \geq 1
\end{equation}
is normalizable for $q \geq 1$ with $\zeta \approx $ 8.68848. Interestingly, this prior has a well-defined mode at $q = 2$, as shown in Fig.~\ref{fig:entropic}.

\begin{figure}[h!]
\begin{center}
\includegraphics[width=0.6\textwidth]{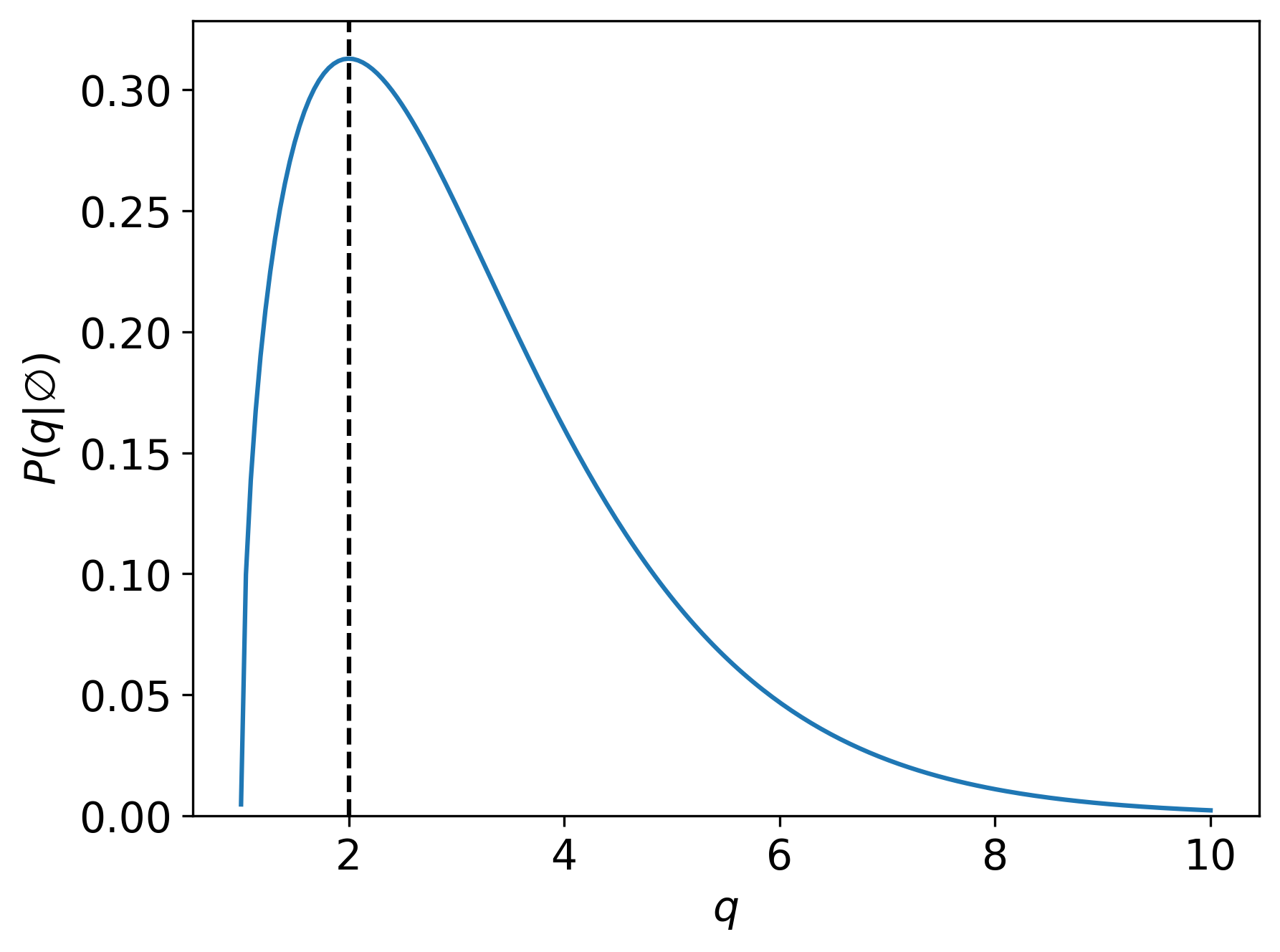}
\end{center}
\caption{Entropic prior $P(q|\varnothing)$ as defined in \eqref{eq:prior_q}.}
\label{fig:entropic}
\end{figure}

\section{Concluding remarks}

We have established a central property of the fundamental inverse temperature functions $\beta_F$ of superstatistical models, namely that they are solutions of autonomous ordinary differential equations, and this 
implies that higher-order derivatives of $\beta_F$ are functions of $\beta_F$ as well. From this property, it follows that a mapping exists between functions of the inverse temperature and functions of the fundamental 
inverse temperature, such that their expectation values are preserved. This result may provide future insights into the nature of the superstatistical and fundamental inverse temperatures, as well as practical 
theoretical tools for the analysis of superstatistical systems.

\section*{Acknowledgments}

\noindent
Funding from ANID FONDECYT 1220651 grant is gratefully acknowledged.

\newpage
\appendix
\section{Verification of the conditional distribution of inverse temperature in the $q$-canonical ensemble}
\label{app:laplace}

The conditional distribution $P(\beta|\beta_F, q, \beta_0)$ in \eqref{eq:qcanon_betadist} for the $q$-canonical ensemble can be verified by the Laplace inversion of
\begin{equation}
\int_0^\infty d\beta\,f(\beta; q, \beta_0)\exp(-\beta E) = \frac{1}{Z_q(\beta_0)}\Big[1 + (q-1)\beta_0 E\Big]^{\frac{1}{1-q}},
\end{equation}
which is \eqref{eq:super_Laplace} for the ensemble function $\rho(E; q, \beta_0)$ in \eqref{eq:qcanon}. From the gamma integral
\begin{equation}
\int_0^\infty dt\,t^{c-1}\,\exp(-at)\exp(-st) = \Gamma(c)(a + s)^{-c}
\end{equation}
we obtain the corresponding weight function $f(\beta; q, \beta_0)$ as
\begin{equation}
\label{eq:f_qcanon}
f(\beta; q, \beta_0) = \frac{1}{Z_q(\beta_0)\beta_0(q-1)\,\Gamma\left(\frac{1}{q-1}\right)}\exp\left(-\frac{\beta}{\beta_0(q-1)}\right)\left(\frac{\beta}{\beta_0(q-1)}\right)^{\frac{1}{q-1}-1}.
\end{equation}

\noindent
Replacing \eqref{eq:qcanon} and \eqref{eq:f_qcanon} into \eqref{eq:pbeta_E}, we have
\begin{equation}
\label{eq:pbeta_E_qcanon}
P(\beta|E, q, \beta_0) = \frac{\Big[1 + \beta_0(q-1)E\Big]^{\frac{1}{q-1}}}{\beta_0(q-1)\,\Gamma\left(\frac{1}{q-1}\right)}\exp\left(-\frac{\beta\Big[1 + \beta_0(q-1)E\Big]}{\beta_0(q-1)}\right)\left(\frac{\beta}{\beta_0(q-1)}\right)^{\frac{1}{q-1}-1},
\end{equation}
and by using \eqref{eq:qcanon_betaF} in the form
\begin{equation}
1 + \beta_0(q-1)E = \frac{\beta_0}{\beta'}
\end{equation}
we obtain
\begin{equation}
P(\beta|E, q, \beta_0) = \frac{1}{\beta_0(q-1)\Gamma\left(\frac{1}{q-1}\right)}\left(\frac{\beta_0}{\beta'}\right)^{\frac{1}{q-1}}\exp\left(-\frac{\beta}{\beta'(q-1)}\right)\left(\frac{\beta}{\beta_0(q-1)}\right)^{\frac{1}{q-1}-1}
\end{equation}
which is $P(\beta|\beta', q)$ in \eqref{eq:qcanon_betadist}.

\bibliography{fts}
\bibliographystyle{unsrt}

\end{document}